\newcommand{\keywords}[1]{\par\addvspace\baselineskip
\noindent\keywordname\enspace\ignorespaces#1}
\newcommand{\N}{\mathbb{N}}
\newcommand{\C}{\mathbb{C}}
\newcommand{\M}{\mathrm{M}}
\newcommand{\ket}[1]{|#1\rangle}
\newcommand{\bra}[1]{\langle#1|}
\newcommand{\ketbras}[1]{|#1\rangle\langle#1|}
\newcommand{\ketbra}[2]{|#1\rangle\langle#2|}
\DeclarePairedDelimiter{\set}{\lbrace}{\rbrace}
\newcommand{\tp}{^{\textrm{T}}}
\newcommand{\ct}{^{\dagger}}
\newcommand{\x}{\otimes}
\newcommand{\id}{\ensuremath{\mathop{\rm Id}\nolimits}}
\newcommand{\ie}{\textit{i.e.}}
\newcommand{\bip} [2]{\C^{#1}\x\C^{#2}}    % Bipartite state space (different dimensions)
\newcommand{\bips}[1]{\bip{#1}{#1}}        % Bipartite state space (same dimensions)
\newcommand{\EQ}{\textrm{EQ}_n}
\newcommand{\EQX}{\textrm{EQ}(X)}
\newcommand{\EQP}[1]{\textrm{EQ}(#1)}
\renewcommand\vec{\operatorname{vec}}
\DeclareMathOperator{\tr}{Tr}
\DeclareMathOperator{\supp}{supp}
\DeclareMathOperator{\Pos}{Pos}
\newcommand{\Thm}[1]{\hyperref[thm:#1]{Theorem~\ref*{thm:#1}}}
\newcommand{\Lem}[1]{\hyperref[lem:#1]{Lemma~\ref*{lem:#1}}}
\newcommand{\Cor}[1]{\hyperref[cor:#1]{Corollary~\ref*{cor:#1}}}
\newcommand{\Def}[1]{\hyperref[def:#1]{Definition~\ref*{def:#1}}}
\newcommand{\Obs}[1]{\hyperref[obs:#1]{Observation~\ref*{obs:#1}}}
\newcommand{\Sect}[1]{\hyperref[sec:#1]{Section~\ref*{sec:#1}}}
\newcommand{\Apx}[1]{\hyperref[apx:#1]{Appendix~\ref*{apx:#1}}}
\newcommand{\Fig}[1]{\hyperref[fig:#1]{Figure~\ref*{fig:#1}}}
\newcommand{\Tab}[1]{\hyperref[tab:#1]{Table~\ref*{tab:#1}}}
\newcommand{\EqRef}[1]{\hyperref[eq:#1]{(\ref*{eq:#1})}}
\newcommand{\Eq}[1]{Equation~\hyperref[eq:#1]{(\ref*{eq:#1})}}
\begin{document}

\mainmatter  % start of an individual contribution

% first the title is needed
\title{Maximally entangled states in pseudo-telepathy games}
% a short form should be given in case it is too long for the running head
\titlerunning{Maximally entangled states in pseudo-telepathy games}

\author{Laura Man\v{c}inska\inst{1}}
\authorrunning{Laura Man\v{c}inska}

% the affiliations are given next; don't give your e-mail address
% unless you accept that it will be published
\institute{Centre for Quantum Technologies, National University of Singapore
%\mailsa\\
%\url{http://www.springer.com/lncs}
}

\toctitle{Maximally entangled states in pseudo-telepathy games}
\tocauthor{Laura Man\v{c}inska}
\maketitle

\begin{abstract}
A pseudo-telepathy game is a nonlocal game which
can be won with probability one using some finite-dimensional quantum strategy but not using a classical one. Our central question is whether there exist two-party pseudo-telepathy games which cannot be won with probability one using a maximally entangled state. Towards answering this question, we develop conditions under which maximally entangled states suffice. In particular, we show that maximally entangled states suffice for weak projection games which we introduce as a relaxation of projection games. Our results also imply that any pseudo-telepathy weak projection game yields a device-independent certification of a maximally entangled state. In particular, by establishing connections to the setting of communication complexity, we exhibit a class of games $G_n$ for testing  maximally entangled states of local dimension $\Omega(n)$. We leave the robustness of these self-tests as an open question.

%Our main result shows that for any game $G$, there exists a game $\tilde{G}$ such that $G$ admits a perfect strategy using a maximally entangled state if and only if $\tilde{G}$ admits some perfect finite-dimensional quantum strategy.

\keywords{nonlocal game, entanglement, projection game, maximally entangled state, pseudo-telepathy, self-testing}
\end{abstract}

%%%%%%%%%%%%%%%%%%%%%%%
\section{Introduction}
%%%%%%%%%%%%%%%%%%%%%%%
Entanglement is a central feature of quantum information processing (see \cite{Gruska} or \cite{Nielsen} for a general introduction). In many cases it can be used to perform nonlocal tasks that would otherwise be impossible or very inefficient. Therefore, entanglement is a resource and one is interested in means of measuring the entanglement content of a quantum state. Many such entanglement measures have been introduced and studied (see \cite{Plenio05} for a survey). The usual approach is to define entanglement as the resource that cannot be increased using local quantum operations and classical communication (LOCC). Here, we will only be concerned with the two-party scenarios. In such a case, any desired shared state can be obtained via LOCC from the so-called \emph{maximally entangled state} 
\[ 
  \ket{\Psi_d} := \frac{1}{\sqrt{d}}\sum_{i=1}^d \ket{ii}.
\]
Therefore, according to any entanglement measure,
%According to the most commonly used entanglement measures (\eg, entropy of entanglement, entanglement of formation, entanglement of distillation and others), 
the state $\ket{\Psi_d}$ possesses the highest entanglement content among all states in $\bips{d}$. It would perhaps be natural to expect that maximally entangled states are the most useful ones for accomplishing nonlocal tasks. It turns out that this intuition fails and there are known examples where less entangled states allow for better performance \cite{Acin02,Acin05,Zohren08}. Moreover, in some cases, maximally entangled states are shown to be suboptimal even if we do not restrict the dimension $d$ of the maximally entangled states  \cite{Junge11,Vidick10,Liang11,Regev12}. Most of these examples are given in terms of Bell inequality violations and when stated in terms of nonlocal games they yield games at which entangled players cannot succeed with probability one. Therefore, it could be possible that maximally entangled states are sufficient to achieve perfect performance, whenever this can be done using some entangled state. 
%We now explain the nonlocal games framework.

%However, in all these examples the quantum players do not succeed with certainty at the task under consideration. Therefore, it could still be the case that maximally entangled states are the most useful states when one is interested in achieving perfect performance. In this paper we study this question in the context of nonlocal games.

In a one-round two-party nonlocal game $G=(S,T,A,B,V,\pi)$, two isolated parties, commonly known as Alice and Bob, play against the verifier. The verifier chooses a pair of questions $(s,t)\in S\times T$ according to some probability distribution $\pi$ and sends $s$ to Alice and $t$ to Bob. The players need to respond with $a\in A$ and $b\in B$ respectively. They win if $V(a,b|s,t)=1$, where $V: A\times B\times S\times T \to \set{0,1}$ is a public verification function. The players' goal is to coordinate strategies so as to maximize their probability of winning; quantum players can use shared entanglement to improve their chances of winning. We say that a strategy is \emph{perfect} if it allows the players to win with probability one. Games which admit perfect finite-dimensional quantum but not classical strategies are known as \emph{pseudo-telepathy} games \cite{Brassard05}. It is a challenging open question to understand if the optimal success probability of a game with finite question and answer sets can always be achieved using a finite-dimensional strategy. Hence, there might exist games with entangled value $\omega^*(G)=1$ that are nevertheless not pseudo-telepathy according to our definition above (see \cite{Leung08,Regev12,MancinskaV14} for examples in related settings). In this paper we avoid addressing the this above issue by dealing only with games which admit perfect finite-dimensional strategies.
%Therefore, in this paper we focus on pseudo-telepathy games that admit a finite dimensional perfect strategy. 
The central question of this paper is as follows:
\\[-0.1in]

\textit{
Do there exist pseudo-telepathy games that cannot be won using maximally entangled state?}\\[-0.1in]

%In this paper we partially answer the above question by exhibiting a class of games which maximally entangled state is sufficient to reach perfect performance (see \Thm{Proj}). Our main result is a characterization of pseudo-telepathy games for which maximally entangled states is sufficient (see \Thm{Main}).

Answering the above question in the negative would imply that maximally entangled state is sufficient for zero-error communication over a noisy classical channel. This is due to the equivalence of zero-error communication protocols and a certain type of nonlocal games outlined in \cite{ZeroError1}.

\paragraph{Previous results.} It is known that maximally entangled state is sufficient for binary\footnote{In a binary game Alice and Bob need to answer bits, \ie, $S=T=\set{0,1}$.} and unique games\footnote{In a unique game for every pair of questions $(s,t)\in S\times T$ there exists a permutation $\sigma_{st}$, such that $V(a,b|s,t)=1$ if and only if $a=\sigma(b)$.} \cite{Cleve04}. However, this is due to a trivial reason, since no entanglement is needed to win these games with probability one whenever it can be done using some entangled strategy. Therefore, these two classes of games do not contain any pseudo-telepathy games and hence are not relevant to our question. Maximally entangled state is also sufficient for the games based on graph homomorphisms \cite{David} and binary constraint system games \cite{Cleve12}. These two classes are more relevant, since they contain pseudo-telepathy games.

\paragraph{Our results.} We partially answer the above question by exhibiting a class of games for which maximally entangled state is sufficient to reach perfect performance (see \Thm{Proj}). The definition of this class is similar to that of projection games. This class subsumes both pseudo-telepathy graph homomorphism games and pseudo-telepathy binary constraint system games. In addition, we show that the answer to our central question is affirmative if and only if after the addition of consistency checks all pseudo-telepathy games remain pseudo-telepathy (see \Thm{Main}). Finally, our results show that any pseudo-telepathy weak projection game allows for a device independent certification of a maximally entangled state (see \Cor{SelfTest}). As an application, we exhibit a family of games $G_n$ that can be used to test for maximally entangled states of local dimension $\Omega(n)$ (see \Cor{Hadamard}).

Our proof technique for \Thm{Proj} is inspired by the one used in \cite{Cameron} in the context of coloring games. 
%In \cite{Paulsen14} results similar to some of ours have also been obtained for quantum correlations.

%%%%%%%%%%%%%%%%%%%%%%%%%%%%%%%%%%%%%
\section{Preliminaries}
%%%%%%%%%%%%%%%%%%%%%%%%%%%%%%%%%%%%%
Let $\M(d_A,d_B)$ denote the set of all complex $d_A\times d_B$ matrices.
We use the mapping $\vec:\M(d_A,d_B)\rightarrow \bip{d_A}{d_B}$ defined via
\[
  \vec : \ketbra{i}{j} \mapsto \ket{i}\ket{j}
\] 
for all $i\in[d_A],j\in[d_B]$ and extended by linearity. 

We now derive a formula that will often be used later. Applying the fact that $\vec(AXB\tp)=A\otimes B \vec(X)$ and $\vec(A)\ct \vec(B) = \tr(A\ct B)$, and choosing matrix $D$ such that $\vec{D}=\ket{\psi}$ we obtain
\begin{align}
\tr(A\otimes B \ketbras{\psi})  & = \bra{\psi} (A \otimes B) \ket{\psi} \nonumber \\
  &= \vec(D)\ct (A\otimes B) \vec(D) \label{eq:Formula}\\
  &= \vec(D)\ct\vec(A D B\tp) \nonumber\\
  &= \tr(D\ct A D B\tp) \nonumber
\end{align}
for all $A\in\M(d_A,d_A),B\in\M(d_B,d_B)$ and $\ket{\psi}\in\bip{d_A}{d_B}$. 
%The \emph{rectangular} matrix $D$ can be assumed to be diagonal, if we work in the Schmidt basis of $\ket{\psi}$.

Any bipartite state $\ket{\psi}\in\bip{d_A}{d_B}$ with $d_A\le d_B$ can be expressed as
\begin{equation}
  \ket{\psi} = \sum_{i\in[d_A]} \lambda_i \ket{\alpha_{i}}\ket{\beta_i}
\label{eq:Schmidt}
\end{equation}
where $\lambda_i \ge 0$ and $\set{\ket{\alpha_i}: i\in[d_A]}$ is an orthonormal basis of $\C^{d_A}$ and the vectors $\ket{\beta_i}$ are orthonormal. This is known as a Schmidt decomposition of $\ket{\psi}$. (Note that the $\vec$ mapping shows that Schmidt decomposition is simply an alternative expression of the singular value decomposition.)
The number of nonzero $\lambda_i$ is called the Schmidt rank of $\ket{\psi}$.
We say that $\ket{\psi}$ has full Schmidt rank, if $d_A=d_B$ and $\lambda_i>0$ for all $i$. We say that a state $\ket{\psi}$ is maximally entangled if all its Schmidt coefficients $\lambda_i$ are the same.

Consider a state $\ket{\psi}\in\bip{d_A}{d_B}$ of full Schmidt rank. Then $d_A=d_B$ and $\tr_A\ketbras{\psi}=\tr_B\ketbras{\psi}$. Moreover, if we work in a Schmidt basis of $\ket{\psi}$, we have that
\[
  \vec \sqrt{\tr_B(\ketbras{\psi})} = \vec{\sqrt{\textstyle{\sum_i} \lambda_i^2 \ketbras{i}}} =  \ket{\psi}.
\]

%%%%%%%%%%%%%%%%%%%%%%%%%%%%%%%%%%%%
\section{Technical lemmas}
%%%%%%%%%%%%%%%%%%%%%%%%%%%%%%%%%%%%
The goal of this section is to better understand the scenario where Alice and Bob each perform a measurement on some bipartite state $\ket{\psi}$ and obtain perfectly correlated outcomes.
%
%establish some results regarding two perfectly correlated measurements performed on diffrent registers of some bipartite a bipartite state done by Alice on Bob respectively.
%
In the context of nonlocal games the following lemma states that maximally entangled state can be used in place of any shared entangled state whose reduced state on either party commutes with the corresponding party's measurement operators. 
\begin{lemma}
Let $\set{E_i}_{i\in[n]}, \set{F_i}_{i\in[m]}\subseteq\Pos(\bips{d})$ be measurements. Also let $\ket{\psi}\in\bips{d}$ be a state of full Schmidt rank and $D:=\sqrt{\tr_B \ketbras{\psi}}=\sqrt{\tr_A \ketbras{\psi}}$.

If $[D,E_i]=0$ for all $i\in[n]$ or $[D,F_i]=0$ for all $i\in[m]$ then for all $i\in[n], j\in[m]$ we have
\[
  \tr(E_i\otimes F_j \ketbras{\psi})=0 \Leftrightarrow 
  \tr(E_i\otimes F_j \ketbras{\Psi})=0,
\]
where $\ket{\Psi}:=\frac{1}{\sqrt{d}}\sum_i\ket{\alpha_i}\ket{\beta_i}$ and $\ket{\alpha_i}\ket{\beta_i}$ is a Schmidt basis of $\ket{\psi}$.
\label{lem:MaxEnt}
\end{lemma}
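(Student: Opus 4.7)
The plan is to reduce both sides of the claimed equivalence to the common condition $E_i F_j^T = 0$. First, I would work in the Schmidt basis of $\ket{\psi}$, in which $D = \diag(\lambda_1,\dots,\lambda_d)$ is positive definite (this is where full Schmidt rank is used) and $\vec(D) = \ket{\psi}$, while $\vec(I/\sqrt{d}) = \ket{\Psi}$. Applying \EqRef{Formula} then rewrites the two traces of interest as
\[
  \tr(E_i \otimes F_j \ketbras{\psi}) = \tr(D E_i D F_j^T),
  \qquad
  \tr(E_i \otimes F_j \ketbras{\Psi}) = \tfrac{1}{d}\tr(E_i F_j^T).
\]

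Assuming the commutation hypothesis on Alice's side, $[D, E_i] = 0$, the key step is to collect the two copies of $D$ into a single invertible factor $D^2$: writing $D E_i D = E_i D^2 = D^2 E_i$, the operator $E_i D^2$ is a product of two commuting positive semidefinite matrices and hence positive semidefinite itself, and $F_j^T$ is likewise positive semidefinite since $F_j$ is. Using the standard fact that $\tr(PQ) = 0 \iff PQ = 0$ for positive semidefinite $P, Q$, this yields
\[
  \tr(E_i \otimes F_j \ketbras{\psi}) = 0 \iff E_i D^2 F_j^T = 0 \iff D^2 E_i F_j^T = 0 \iff E_i F_j^T = 0,
\]
where the last equivalence uses invertibility of $D^2$. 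One more application of the same positivity fact gives $E_i F_j^T = 0 \iff \tr(E_i F_j^T) = 0$, which by the display above is equivalent to $\tr(E_i \otimes F_j \ketbras{\Psi}) = 0$.

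The symmetric case $[D, F_j] = 0$ is handled by performing the analogous manipulations on the right side of $\tr(D E_i D F_j^T)$, noting that $D$ real diagonal makes $[D, F_j] = 0$ pass to $[D, F_j^T] = 0$. The main subtlety I expect is recognizing that commutativity of $D$ with the measurement operators is precisely what lets the Schmidt weights be absorbed into a single invertible factor that can be cancelled: without this, the $\lambda_i$'s can re-weight contributions to the trace in a way that destroys the zero/non-zero correspondence with the flat maximally entangled state.
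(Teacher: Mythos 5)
Your proposal is correct and follows essentially the same route as the paper: work in the Schmidt basis so that $\vec(D)=\ket{\psi}$, use \EqRef{Formula} to reduce both traces to $\tr(DE_iDF_j\tp)$ and $\tr(E_iF_j\tp)$, and exploit positivity plus the commutation hypothesis and invertibility of $D$ to reduce both vanishing conditions to $E_iF_j\tp=0$. Your explicit handling of the $[D,F_j]=0$ case via $D$ being real diagonal (so that it also commutes with $F_j\tp$) is a detail the paper leaves to the reader, but the argument is the same.
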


\begin{proof}
Let us work in a Schmidt basis of $\ket{\psi}$. Then $\vec D=\vec \sqrt{\tr_B \ketbras{\psi}}=\ket{\psi}$. Since $\tr(E_i\otimes F_j \ketbras{\psi})=\tr(D\ct E_i D F_j\tp)$ by \Eq{Formula} and $D\ct E_i D, F_j\tp\succeq 0$, we obtain that 
\[
  \tr(E_i\otimes F_j \ketbras{\psi})=0 \Leftrightarrow 
  D\ct E_i D F_j=0.
\]
We now assume that $[D,E_i]=0$ for all $i\in[n]$ (the other case can be proven similarly). Since $D\ct E_i D F_j\tp=D\ct D E_i F_j\tp$ and $D$ has full rank, we have 
\[
  D\ct E_i D F_j\tp=0 \Leftrightarrow 
  E_i F_j\tp = 0.
\]
Observe that  $\vec{\ket{\Psi}}=\id$ and hence $\tr(E_i\otimes F_j \ketbras{\Psi})=\tr(\id\ct E_i \id F_j\tp)=\tr(E_i F_j\tp)$. Thus we obtain 
\[
  E_i F_j\tp = 0 \Leftrightarrow \tr(E_i\otimes F_j \ketbras{\Psi})=0
\]
which completes the proof.\qed
\end{proof}

In the context of local measurements, the following lemma states that only \emph{projective} local measurements can give rise to perfectly correlated outcomes. Moreover, in such a case maximally entangled state $\ket{\Psi}$ can be used as the shared entangled state. 

\begin{lemma}
Let $\set{E_i}_{i\in[n]}, \set{F_i}_{i\in[n]}\subseteq\Pos(\bips{d})$ be two measurements and $\ket{\psi}\in\bips{d}$ be a state of full Schmidt rank and $D:=\sqrt{\tr_B \ketbras{\psi}}$. If for all \emph{distinct} $i,j\in[n]$
\begin{equation} 
  \tr(E_i\otimes F_j \ketbras{\psi})=0
\label{eq:Orthog}
\end{equation}
then for all $i\in[n]$ we have that
\begin{itemize}
  \item operators $E_i, F_i$ are projectors and
  \item $[D,E_i]=[D,F_i]=0$.
\end{itemize}
\label{lem:Commute}
\end{lemma}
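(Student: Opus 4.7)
The plan is to work in a Schmidt basis of $\ket{\psi}$, so that (as in the proof of \Lem{MaxEnt}) $\vec D = \ket{\psi}$ with $D = \diag(\lambda_1,\ldots,\lambda_d)$ real diagonal and positive definite. Applying \EqRef{Formula}, the hypothesis \EqRef{Orthog} becomes $\tr(D E_i D F_j^T) = 0$ for all distinct $i,j$. Both $D E_i D$ and $F_j^T$ are positive semidefinite (note $F_j^T = \bar F_j$ has the same spectrum as $F_j$), so this trace-zero statement upgrades to the operator equality
\[
  D E_i D F_j^T = 0 \quad (i \neq j).
\]

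From this I extract two companion identities by summing. Summing over $i \neq j$ with $\sum_i E_i = I$ and canceling the leftmost (invertible) $D$ gives $D F_j^T = E_j D F_j^T$; summing over $j \neq i$ with $\sum_j F_j = I$ gives $E_i D = E_i D F_i^T$. Chaining these yields the central identity $E_i D = D F_i^T$, and taking the Hermitian adjoint (using that both $D$ and $F_i^T$ are Hermitian) produces the companion $D E_i = F_i^T D$.

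With these two identities in hand, commutation drops out of a short calculation: $E_i D^2 = (E_i D) D = D F_i^T D = D(F_i^T D) = D^2 E_i$, so $[D^2, E_i] = 0$, which implies $[D, E_i] = 0$ since $D$ is a function of $D^2$. Substituting $[D, E_i] = 0$ back into $E_i D = D F_i^T$ and canceling the invertible $D$ gives $E_i = F_i^T$, from which $[D, F_i] = 0$ follows by transposing $[D, E_i] = 0$ and using $D^T = D$ in the Schmidt basis.

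For the projector property, I express $p_{ii} := \tr((E_i \otimes F_i)\ketbras{\psi})$ in two ways. Marginalizing \EqRef{Orthog} gives $p_{ii} = \tr((E_i \otimes I)\ketbras{\psi}) = \tr(D^2 E_i)$, while \EqRef{Formula} combined with $F_i^T D = D E_i$ gives $p_{ii} = \tr(D E_i D F_i^T) = \tr(D^2 E_i^2)$. Thus $\tr(D^2(E_i - E_i^2)) = 0$; since $D^2 \succ 0$ and $E_i - E_i^2 = E_i(I - E_i) \succeq 0$, we conclude $E_i^2 = E_i$, and then $F_i = E_i^T$ is automatically a projector. The only genuinely delicate step is the initial upgrade from trace-zero to operator-zero, which hinges on recognizing that $F_j^T$ is PSD; everything downstream is algebraic bookkeeping around the identity $E_i D = D F_i^T$.
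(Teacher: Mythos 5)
Your proof is correct, but it takes a genuinely different route from the paper's. The paper first establishes projectivity via a support argument: summing \EqRef{Orthog} over $j\neq i$ gives $\supp(D\ct E_i D)\subseteq \supp F_i\tp$ and symmetrically $\supp(DF_i\tp D\ct)\subseteq \supp E_i$, then rank preservation under conjugation by the invertible $D$ upgrades these to equalities, so the orthogonality constraints force mutually orthogonal supports and hence projective measurements; commutation is then obtained by showing $E_iD^2E_j=0$ for $i\neq j$, so $D^2$ is block-diagonal with respect to the $E_i$. You instead upgrade the trace conditions to the operator identities $DE_iDF_j\tp=0$ (the same PSD fact the paper uses), sum them against $\sum_i E_i=\id$ and $\sum_j F_j=\id$ to extract the intertwining relation $E_iD=DF_i\tp$, from which $[D^2,E_i]=0$ and hence $[D,E_i]=0$ follow in two lines, and you get projectivity last by comparing $\tr(D^2E_i)$ with $\tr(D^2E_i^2)$ and invoking $D^2\succ 0$. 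Every step checks out: the cancellations use only invertibility of $D$, the adjoint step correctly uses that $F_i\tp=\overline{F_i}$ is Hermitian and PSD, and $E_i-E_i^2\succeq 0$ because $0\preceq E_i\preceq \id$. What your route buys is a shorter, purely algebraic argument that additionally yields the explicit relation $F_i=E_i\tp$ in the Schmidt basis (the paper only gets $E_i=\supp(DF_i\tp D\ct)$ and its mirror), which makes the structural remark after \Lem{Commute} about the measurements being transpose-related and block-diagonal in the Schmidt basis immediate; what the paper's route buys is that projectivity and the identification of $F_i\tp$ as a support projector come out before any commutation is known, which is the form reused verbatim in \Cor{Commute}.
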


\begin{proof}
First, note that $D$ can be assumed to be diagonal, if we work in the Schmidt basis of $\ket{\psi}$ and hence $D\ct=D\tp=D$. Now we rewrite Equation~(\ref{eq:Orthog}) as
\begin{equation}
  \tr(D\ct E_i D F_j\tp) = 0,
\label{eq:Orthog2}
\end{equation}
for all distinct $i, j\in[n]$. It now follows that $\tr\left(D\ct E_i D (\id-F_i\tp)\right)=0$ and hence 
\[
  \supp(D\ct E_i D)\subseteq \supp F_i\tp,
\]
where $\supp(M)$ denotes the span of the columns of $M$. Similarly, the fact that $\tr\left((\id-E_i) D F_i\tp D\ct\right)=0$ gives
\[
  \supp(DF_i\tp D\ct) \subseteq \supp(E_i).
\]
Since conjugation by full rank matrix $D$ does not change the rank, the two above inclusions imply that
\[
  \supp F_i\tp = \supp(D\ct E_i D) \;\text{ and } \;
  \supp(E_i) = \supp(DF_i\tp D\ct).
\]
Combining this with the orthogonality constraints~(\ref{eq:Orthog2}), we get that 
\[
  \supp F_i\tp \perp \supp F_j\tp \text{ and } \supp E_i \perp \supp E_j
\]
for all distinct $i,j\in[n]$. Hence, both $\set{E_i}_{i\in[n]}$ and $\set{F_i}_{i\in[n]}$ are projective measurements and moreover
\[
 E_i = \supp(DF_i\tp D\ct) \text{ and } F_i\tp = \supp(D\ct E_i D),
\]
where by slight abuse of notation we use $\supp(M)$ to denote the \emph{projector} onto the span of the columns of $M$.

We now show that $[D,E_i]=0$ for all $i\in[n]$ (the proof for $[D,F_i]=0$ is similar). From the orthogonality condition~(\ref{eq:Orthog2}) and the fact that $D\ct E_i D, F_j\tp\succeq 0$, we obtain that for all distinct $i,j\in[n]$
\[
 0 = D\ct E_i D F_j\tp = D\ct E_i D \supp(D\ct E_j D).
\]
Hence, for all distinct $i,j\in[n]$ we also have $D\ct E_i D D\ct E_j D = 0$ and thus $E_i D^2 E_j=0$, since $D$ has full rank and $D\ct=D$. Now, $E_i D^2 E_j=0$ implies that $D^2$ is block-diagonal with respect to the partition of the space corresponding to projectors $E_i$. Since in such a partition each $E_i$ is block-diagonal with the blocks being $c\id$, where $c\in\set{0,1}$, it follows that $[D^2,E_i]=0$. Hence also $[D,E_i]=0$ for all $i\in[n]$ as desired.\qed
\end{proof}

It follows from the proof of \Lem{Commute} that maximally entangled states are essentially the only states that give rise to perfectly correlated outcomes. Since the operators $E_i$ and $F_i$ commute with $D$, their off-diagonal entries corresponding to different Schmidt coefficients of $\ket{\psi}$ must be zero. Thus, the $E_i$ and $F_i$ are block-diagonal, where the blocks are labeled by distinct Schmidt coefficients of $\ket{\psi}$. Hence, the measurements $\set{E_i}$ and $\set{F_i}$ are direct sums of projective measurements each of which is performed on a maximally entangled state. 
%Therefore, the observation of perfectly correlated outcomes on isolated systems certifies the presence of maximally entangled shared state.

We now establish a result similar to \Lem{Commute} for two measurements with different number of outcomes.
 
\begin{corollary}
Let $\set{E_i}_{i\in[n]}, \set{F_i}_{i\in[m]}\subseteq\Pos(\bips{d})$ be two measurements and $\ket{\psi}\in\bips{d}$ be a state of full Schmidt rank and $D:=\sqrt{\tr_B \ketbras{\psi}}$. If there exists a function $f:[n]\rightarrow[m]$ such that for all $i\in[n]$ and $j\neq f(i)$
\begin{equation} 
  \tr(E_i\otimes F_j \ketbras{\psi})=0
\end{equation}
then for all $i\in[m]$ we have that
\begin{itemize}
  \item operators $F_j, E'_j:=\sum_{i : f(i)=j}{E_i}$ are projectors and
  \item $[D,E'_j]=[D,F_j]=0$.
\end{itemize}
\label{cor:Commute}
\end{corollary}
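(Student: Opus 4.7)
The plan is to reduce the corollary to \Lem{Commute} by coarse-graining the first measurement according to the function $f$. Concretely, I would define the POVM $\{E'_j\}_{j\in[m]}$ by $E'_j := \sum_{i : f(i) = j} E_i$, which is still a valid measurement on $\bips{d}$ since $\sum_j E'_j = \sum_i E_i = \id$ (with the convention that an empty sum is zero, so $E'_j = 0$ whenever $j$ is not in the image of $f$, which is fine as the zero operator is a projector and commutes with $D$).

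Next, I would verify the orthogonality hypothesis of \Lem{Commute} for the pair $\{E'_j\}_{j\in[m]}, \{F_j\}_{j\in[m]}$ on $\ket{\psi}$. For any distinct $j,k \in [m]$, linearity of the trace gives
\[
  \tr(E'_j \otimes F_k \ketbras{\psi}) \;=\; \sum_{i : f(i)=j} \tr(E_i \otimes F_k \ketbras{\psi}) \;=\; 0,
\]
where each summand vanishes by assumption, since $f(i) = j \neq k$ means $k \neq f(i)$.

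Having established the hypothesis, I would invoke \Lem{Commute} applied to $\{E'_j\}_{j\in[m]}$ and $\{F_j\}_{j\in[m]}$. The conclusion is precisely that each $E'_j$ and each $F_j$ is a projector and that $[D,E'_j] = [D,F_j] = 0$, which matches the two bullet points of the corollary.

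There is no real obstacle here: the only substantive content is that the coarse-grained measurement inherits the perfect anti-correlation property from the original data, and this is immediate from linearity. Perhaps the one subtle point worth flagging is that we should not expect the individual $E_i$'s to be projectors (they need not be when several of them coarse-grain into a single $E'_j$), so the statement of the corollary correctly only asserts projectivity and commutation at the level of the coarse-grained operators $E'_j$.
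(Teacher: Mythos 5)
Your proposal is correct and is essentially the paper's proof: the paper also reduces the corollary to \Lem{Commute} applied to the coarse-grained measurement $\set{E'_j}_{j\in[m]}$ together with $\set{F_j}_{j\in[m]}$, with the orthogonality hypothesis following by linearity exactly as you write. Your write-up just spells out the verification that the paper leaves implicit.
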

\begin{proof}
This is exactly the statement of Lemma~\ref{lem:Commute} for measurements $\set{E'_j}_{j\in[m]}$ and $\set{F_j}_{j\in[m]}$.\qed
\end{proof}

%%%%%%%%%%%%%%%%%%%%%%%%%%%%%%%%%%%%
\section{Weak projection games}
%%%%%%%%%%%%%%%%%%%%%%%%%%%%%%%%%%%%

We now define a class of nonlocal games for which we will later show that maximally entangled state can be used to win with certainty, whenever it can be done using some finite-dimensional quantum strategy.
\begin{definition}
Let $G=(S,T,A,B,V,\pi)$ be a nonlocal game. We say that $G$ \emph{is weakly projective for Bob}, if for each of Bob's inputs $t\in T$ there exists an input $s\in S$ for Alice and a function $f_{st}:A \to B$ such that $V(s,t,a,b)=1$ if and only if $b=f_{st}(a)$. 
%for some function $f_{st}$.

The definition for a nonlocal game that is weakly projective for Alice is similar.
%We define weakly projective for Alice similarly.
We say that $G$ is a \emph{weak projection} game if it is weakly projective for Bob or for Alice.
\label{def:Proj}
\end{definition}
%The definition for a nonlocal game that is weakly projective for Alice is similar. 
The term ``weak projection game'' was chosen since $G$ is called a projection game if for \emph{all} pairs $(s,t)\in S\times T$ there exists a function $f_{st}$ such that $V(s,t,a,b)=1$ if and only if $f_{st}(a)=b$. Any projection game is weakly projective for both Alice and Bob; the converse, however, does not hold.

\begin{theorem}
Suppose that a nonlocal game $G=(S,T,A,B,V,\pi)$ is weakly projective for Bob. If a shared entangled state $\ket{\psi}\in\bips{d}$ of full Schmidt rank and measurements $\mathcal{E}^{(s)}:=\set{E^{s}_i}_{i\in A}$ and $\mathcal{F}^{(t)}:=\set{F^{t}_i}_{i\in B}$ specify a perfect strategy for $G$ then
\begin{enumerate}
  \item operators $F^t_j$ are projectors for all $t\in T,j\in B$;
  \item a maximally entangled state $\ket{\Psi}$ can be used in place of $\ket{\psi}$.
\end{enumerate}
\label{thm:Proj}
\end{theorem}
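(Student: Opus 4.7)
The plan is to apply \Cor{Commute} once for each of Bob's questions to pin down the structure of Bob's measurements, and then feed the resulting commutation relations into \Lem{MaxEnt} to swap $\ket{\psi}$ for $\ket{\Psi}$. First I fix an arbitrary $t\in T$ and pull out the Alice-side input $s=s(t)\in S$ and function $f_{st}\colon A\to B$ promised by the weak projection property. Since the given strategy wins on the pair $(s,t)$ with probability one, every outcome with $b\neq f_{st}(a)$ must occur with probability zero, which is the statement
\[
  \tr\bigl(E^s_a\otimes F^t_j\,\ketbras{\psi}\bigr)=0\quad\text{for all }a\in A,\ j\in B\text{ with }j\neq f_{st}(a).
\]
This is precisely the hypothesis of \Cor{Commute} applied to the measurements $\set{E^s_a}_{a\in A}$ and $\set{F^t_j}_{j\in B}$ with witness function $f_{st}$, so the corollary delivers that each $F^t_j$ is a projector and $[D,F^t_j]=0$. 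Letting $t$ range over $T$ establishes conclusion~(1) and, as a byproduct, shows that $D$ commutes with \emph{every} operator in Bob's strategy.

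For conclusion~(2) I would invoke \Lem{MaxEnt} in the variant where Bob's operators commute with $D$. For every $(s',t',a,b)$ the lemma yields
\[
  \tr\bigl(E^{s'}_a\otimes F^{t'}_b\,\ketbras{\psi}\bigr)=0 \iff \tr\bigl(E^{s'}_a\otimes F^{t'}_b\,\ketbras{\Psi}\bigr)=0.
\]
Perfectness of the original strategy means exactly that the left-hand side vanishes on every losing tuple (every $(s',t',a,b)$ with $V(s',t',a,b)=0$ and $\pi(s',t')>0$), so the right-hand side vanishes there as well. Hence the strategy obtained by keeping the same measurements but replacing $\ket{\psi}$ with $\ket{\Psi}$ still wins with certainty, establishing~(2).

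The main obstacle is conceptual rather than computational: one must notice the asymmetry that although Alice's measurements are never directly required to be projective, the weak projection property at the \emph{single} pair $(s(t),t)$ is already strong enough, via \Cor{Commute}, to force \emph{all} of Bob's operators at input $t$ to be projectors commuting with $D$. The other delicate point is a choice-of-basis issue, namely that $D$ and $\ket{\Psi}$ must be defined via the Schmidt decomposition of $\ket{\psi}$ used in \Lem{MaxEnt}; once these are fixed consistently, the two technical lemmas of the previous section do essentially all of the work and no further case analysis is needed.
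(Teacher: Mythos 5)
Your proposal is correct and follows essentially the same route as the paper: apply \Cor{Commute} to the pair $(\mathcal{E}^{(s(t))},\mathcal{F}^{(t)})$ for each $t\in T$ to get that Bob's operators are projectors commuting with $D$, then invoke \Lem{MaxEnt} to replace $\ket{\psi}$ by $\ket{\Psi}$. The only difference is that you spell out the intermediate steps (the vanishing of losing-outcome probabilities and the per-pair equivalence) that the paper leaves implicit.
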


\begin{proof}
%First, note that the assumption that $\ket{\psi}$ has full Schmidt rank can always be fulfilled by restricting Alice's and Bob's measurement operators to the support of their respective reduced states. 

To prove the theorem, for each $t\in T$ we apply Corollary~\ref{cor:Commute} to measurements $\mathcal{E}^{(s(t))}$ and $\mathcal{F}^{(t)}$, where $s(t)$ is Alice's input corresponding to $t$ from Definition~\ref{def:Proj}. This gives us item (1) and that $[F_j^t,D]=0$ for all values of $t,j$. Now, by Lemma~\ref{lem:MaxEnt} we get item (2).\qed
\end{proof}

Recalling discussion after the proof of \Lem{Commute} we immediately obtain that weak projection games allow for self-testing of maximally entangled states.
\begin{corollary}
Consider a pseudo-telepathy game $G=(S,T,A,B,V,\pi)$ which is weakly projective for Bob. Moreover, assume that any perfect strategy for $G$ must use entanglement of local dimension at least $d$. Then any  perfect strategy for $G$ certifies the presence of a maximally entangled state of local dimension at least $d$.
\label{cor:SelfTest}
\end{corollary}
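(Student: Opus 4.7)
The plan is to combine \Thm{Proj} with the structural remark following the proof of \Lem{Commute}: in any perfect strategy for a game weakly projective for Bob, Bob's measurement operators are block-diagonal in the Schmidt basis of the shared state, and on each block the reduced state is maximally entangled. The self-testing claim then follows by restricting the strategy to a single block and invoking the dimension hypothesis.

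More precisely, let $(\ket{\psi},\set{E^s_i},\set{F^t_j})$ be any perfect strategy for $G$; after padding I may assume $\ket{\psi}\in\bips{r}$ has full Schmidt rank $r$, and I work in its Schmidt basis so that $D:=\sqrt{\tr_B\ketbras{\psi}}$ is diagonal. \Thm{Proj} ensures that every $F^t_j$ is a projector commuting with $D$, so each $F^t_j$ is block-diagonal with respect to the eigenspaces $P_1,\dots,P_K$ of $D$; let $m_k:=\rank P_k$ record the multiplicity of the $k$-th distinct Schmidt coefficient $\mu_k$, and decompose $\ket{\psi}=\sum_k\sqrt{p_k}\ket{\psi_k}$, where $\ket{\psi_k}$ is the maximally entangled state supported on $P_k\otimes P_k$ and $p_k=m_k\mu_k^2$.

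Because the $F^t_j$ are block-diagonal on Bob's side, the cross terms $\bra{\psi_{k'}}(E^s_a\otimes F^t_b)\ket{\psi_k}$ vanish for $k\neq k'$; hence for every input pair $(s,t)$ and outputs $(a,b)$,
\[
  \tr\bigl((E^s_a\otimes F^t_b)\ketbras{\psi}\bigr)=\sum_k p_k\,\tr\bigl((E^{s,k}_a\otimes F^{t,k}_b)\ketbras{\psi_k}\bigr),
\]
where $E^{s,k}_a:=P_kE^s_aP_k$ and $F^{t,k}_b:=P_kF^t_bP_k$ form POVMs on block $k$ (they sum to $P_k$, \ie, the identity on the $k$-th block). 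Summing the right-hand side over winning $(a,b)$ yields a convex combination of per-block success probabilities that equals $1$; by nonnegativity each block-strategy must itself achieve success probability $1$ on the maximally entangled state $\ket{\psi_k}$ of local dimension $m_k$. The hypothesis that any perfect strategy requires local dimension at least $d$ then forces $m_k\ge d$ for every $k$, so $\ket{\psi}$ contains a maximally entangled direct summand of local dimension at least $d$.

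The one subtlety is that \Thm{Proj} only gives block-diagonality for Bob's operators, not Alice's. This is harmless: the vanishing of the cross terms is driven entirely by Bob's side, and Alice's restricted POVMs $\set{P_kE^s_aP_k}_a$ are automatically valid since they sum to $P_k$. Hence the full block decomposition propagates through Bob's side alone.
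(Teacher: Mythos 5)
Your proposal is correct and takes essentially the same route as the paper, which deduces the corollary by combining \Thm{Proj} with the direct-sum (block-diagonal) structure described in the discussion after \Lem{Commute}. You simply make explicit the convexity argument showing each Schmidt block carries a perfect strategy on a maximally entangled state, and you rightly note that commutation with $D$ on Bob's side alone suffices to kill the cross-block terms.
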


To illustrate how to apply the above corollary, let us consider coloring games of Hadamard graphs $H_n$ \cite{Avis06}. The vertices of $H_n$ are all the $n$-bit strings and two vertices are adjacent if the corresponding strings differ in exactly half of the positions. In the $c$-coloring game, Alice and Bob each get asked a vertex $s$ and $t$ respectively and they need to respond with the same color when $s=t$ and different colors when $s$ is adjacent to $t$ \cite{Cleve04}. Let $G_n$ be the game for $n$-coloring the Hadamard graph $H_n$. \label{pg:Gn}
Like any of the coloring games, $G_n$ is weakly projective. It is also known that $G_n$ is a pseudo-telepathy game whenever $4|n$ and $n \ge 12$ \cite{Avis06,Godsil08}. 

\begin{restatable}{theorem}{Lower}
\label{thm:Lower}
Let $4|n$ and assume that $\mathcal{S}$ is a perfect strategy for $G_n$. Then $S$ uses entanglement of local dimension at least $\Omega(n)$.
\label{thm:EntLowerBd}
\end{restatable}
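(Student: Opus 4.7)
The plan is to combine \Cor{SelfTest} with a reduction to the one-way communication complexity of a Hadamard-distance distinguishing problem. The coloring game $G_n$ requires Alice and Bob to answer the same color whenever their vertex inputs coincide, so choosing $s=t$ with $f_{tt}=\id$ verifies the hypothesis of \Def{Proj}; hence $G_n$ is weakly projective for Bob (and by symmetry for Alice). Since $G_n$ is pseudo-telepathy for $4\mid n$ and $n\ge 12$, \Cor{SelfTest} reduces the theorem to showing that the minimum local dimension of any perfect strategy is $\Omega(n)$.

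To analyze such a strategy, apply \Thm{Proj}: we may assume it uses a maximally entangled state $\ket{\Psi}\in\bips{d}$ with projective measurements $\{E^v_c\}_{c\in[n]}$, one per vertex $v\in\{0,1\}^n$. Working in the Schmidt basis of $\ket{\Psi}$ and applying \Eq{Formula}, the consistency and adjacency conditions translate into the \emph{projective coloring} constraints
\[
  \textstyle\sum_c E^v_c = \id, \qquad E^v_c\, E^w_c = 0 \quad \text{whenever } |v\oplus w|=n/2,\ c\in[n].
\]

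From such a projective coloring in dimension $d$ I would build an entanglement-assisted one-way classical protocol for the distributed Deutsch--Jozsa problem (distinguish $x=y$ from $|x\oplus y|=n/2$). Alice and Bob pre-share $\ket{\Psi}$; on input $x$, Alice measures $\{E^x_c\}$, obtains $c_A\in[n]$, and transmits $c_A$; on input $y$, Bob measures $\{E^y_c\}$, obtains $c_B$, and outputs ``equal'' iff $c_A=c_B$. By consistency, $c_A=c_B$ whenever $x=y$; by the projective coloring constraint, $c_A\neq c_B$ whenever $|x\oplus y|=n/2$. So the protocol is zero-error, using one-way classical communication of $\log n$ bits together with shared entanglement of Schmidt rank $d$.

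The last step is to invoke a communication-complexity lower bound on this zero-error task that forces $d=\Omega(n)$. I expect this to be the main obstacle: the purely classical one-way complexity of the distinguishing problem equals $\log\chi(H_n)$, which is exponential in $n$ by Frankl--R\"odl, but our protocol already sends $\log n$ classical bits and is augmented by entanglement, so the lower bound must carefully balance these two resources. A natural route is to argue directly that, when the message is restricted to $\log n$ bits, any zero-error protocol requires shared entanglement of local dimension $\Omega(n)$, e.g.\ via a projective-fractional-chromatic-number estimate for $H_n$ in the spirit of \cite{Cameron}.
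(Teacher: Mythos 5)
There is a genuine gap, and you have located it yourself: the step you defer as ``the main obstacle'' is precisely the content of the theorem, and your proposal does not prove it. Everything before that point (passing to a maximally entangled state with projective measurements via \Thm{Proj}, and converting the resulting projective coloring into a zero-error one-way protocol in which Alice sends her $\log n$-bit outcome) is a routine repackaging of the perfect strategy; the whole difficulty is the claimed trade-off that a $\log n$-bit classical message together with shared entanglement of local dimension $d$ forces $d=\Omega(n)$, which you only gesture at via an unspecified ``projective-fractional-chromatic-number estimate.'' Your choice of model also makes this harder than necessary: with pre-shared entanglement and purely classical communication the orthogonal rank $\xi$ is no longer the relevant parameter, and you would need a bound of the form $\chi(X)\le g(d)$ for graphs admitting $d$-dimensional projective colorings --- a statement you do not prove and which is not what \cite{Cameron} provides. (Incidentally, \Cor{SelfTest} plays no role in this theorem; the logical order in the paper is the reverse, with \Cor{Hadamard} obtained by combining the dimension bound with \Cor{SelfTest}.)

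The paper sidesteps exactly this issue by keeping the quantum register as a \emph{message} rather than as pre-shared entanglement: Alice prepares $\ket{\psi}$ herself (so \Thm{Proj} is not even needed), measures her half with $\mathcal{A}^s$, and sends Bob both her outcome ($\log n$ classical bits) and the remaining register ($\log d$ qubits), yielding an exact one-way protocol for the promise problem $\EQ$. The missing trade-off is then supplied by \Lem{Mix}: the classical message partitions $H_n$ into at most $n$ induced subgraphs $L_m$, each of which inherits a zero-error quantum protocol of $\log d$ qubits, so $\xi(L_m)\le d$; the bound $\chi(X)\le(1+2\sqrt{2})^{2\xi(X)}<14^{\xi(X)}$ of \cite{Cameron} gives $\chi(L_m)<14^{d}$, hence a deterministic one-way protocol for $\EQ$ of cost $\log n+3d$, contradicting $D^1(\EQ)=\log\chi(H_n)=\Omega(n)$ (Frankl--R\"odl \cite{Frankl}) unless $d=\Omega(n)$. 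To salvage your entanglement-assisted variant you would have to prove an analogous bound on $\chi$ in terms of the dimension of a projective coloring; as written, your argument stops exactly where the theorem begins.
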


The proof of \Thm{Lower} along with a detailed discussion can be found in \Apx{Comm}. The basic idea is to translate entangled strategies for $G_n$ into one-way communication protocols for an appropriate communication version of $G_n$ (a certain promise equality problem $\EQ$). With this translation in hand, it only remains to obtain an appropriate lower on the  one-way communication complexity of $\EQ$, which we accomplish in \Lem{CommLowerBd}. The general approach used to convert lower bounds in the communication setting to lower bounds on the dimension of the shared entanglement might be of independent interest.

Combining \Thm{Lower} with \Cor{SelfTest} immediately gives the following: 

\begin{corollary}
Let $n\in \N$ be such that $4|n$ and $n \ge 12$.
Any perfect strategy for the coloring game $G_n$ certifies the presence of a maximally entangled state of dimension at least $\Omega(n)$.
\label{cor:Hadamard}
\end{corollary}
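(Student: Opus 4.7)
The plan is to verify the three hypotheses of \Cor{SelfTest} for $G_n$ and then invoke \Thm{Lower} to supply the dimension bound. Since \Cor{SelfTest} was already proved in the excerpt and the dimension lower bound on the entanglement is isolated into \Thm{Lower}, the corollary reduces to a bookkeeping argument once the right structural facts about $G_n$ are spelled out.

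First I would check that $G_n$ is a weak projection game. Recall that in any coloring game Alice and Bob win on the pair $(t,t)$ if and only if they output the same color. Hence for each of Bob's questions $t\in T$ we may take $s=t\in S$ as Alice's matching question and $f_{tt}=\id_A$, so that $V(t,t,a,b)=1 \Leftrightarrow b=f_{tt}(a)$. This matches \Def{Proj} and shows $G_n$ is weakly projective for Bob (and, symmetrically, for Alice). Second, I would record the pseudo-telepathy hypothesis: by the results of \cite{Avis06,Godsil08} cited in the paper, $G_n$ admits a perfect finite-dimensional quantum strategy but no perfect classical strategy whenever $4\mid n$ and $n\ge 12$, so $G_n$ is a pseudo-telepathy weak projection game in this regime.

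Third, I would invoke \Thm{Lower}, which is stated (and proved in \Apx{Comm}) for the same range $4\mid n$, to conclude that every perfect strategy $\mathcal{S}$ for $G_n$ uses shared entanglement of local dimension at least $\Omega(n)$. Plugging this dimension bound and the weak-projectivity verified above into \Cor{SelfTest}, we obtain that any perfect strategy for $G_n$ certifies the presence of a maximally entangled state $\ket{\Psi_d}$ of local dimension $d\ge \Omega(n)$, which is the claim.

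There is essentially no hard step left at this stage of the paper: the conceptual content sits inside \Thm{Proj} (which makes maximal entanglement forced) and \Thm{Lower} (which forces the dimension via a communication-complexity reduction to $\EQ$). The only point one has to be slightly careful about is that \Cor{SelfTest} requires the shared state in the hypothetical strategy to have full Schmidt rank on the local space; this can be arranged without loss of generality by restricting each player's space to the support of their reduced state, after which the implicit lower bound from \Thm{Lower} still applies. Thus the proof is simply: weak projectivity of $G_n$ $+$ pseudo-telepathy of $G_n$ $+$ \Thm{Lower} $\Rightarrow$ \Cor{SelfTest} applied to $G_n$.
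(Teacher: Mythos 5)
Your proposal is correct and follows the same route as the paper: the paper likewise notes that $G_n$, being a coloring game, is weakly projective and pseudo-telepathy for $4\mid n$, $n\ge 12$, and then obtains the corollary by combining \Thm{Lower} with \Cor{SelfTest}. Your extra remark about restricting to the support of the reduced state to ensure full Schmidt rank is a sensible piece of bookkeeping that the paper leaves implicit.
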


%\begin{proof}
%Let $n\in \N$ be such that $4|n$ and $n \ge 12$.
%To apply \Cor{SelfTest}, we need to show that any perfect strategy for $G_n$ requires entanglement of local dimension at least $\sqrt{n}$. We achieve this by showing that given a perfect strategy for $G_n$ that uses entanglement of local dimension $d$, we can construct a protocol for the one-way communication version of $G_n$ with communication cost of $\log(dn)$ qubits. In combination with a lower bound for the one-way communication cost this construction yields the desired result.
%\end{proof}

%\begin{corollary}
%Consider a pseudo-telepathy game $G=(S,T,A,B,V,\pi)$ which is weakly projective for Alice. Moreover, assume that $G$ is minimal in the sense that it ceases to be a pseudo-telepathy game if we remove any of the Bob's answers. Then any  perfect strategy for $G$ certifies the presence of a maximally entangled state of local dimension at least $\abs{B}$.
%\label{cor:SelfTest}
%\end{corollary}

%%%%%%%%%%%%%%%%%%%%%%%%%%%%%%%%%%%%
\section{Games with added consistency checks}
%%%%%%%%%%%%%%%%%%%%%%%%%%%%%%%%%%%%

Given a nonlocal game $G$, we now define a new game by adding consistency check questions for one of the parties. 

\begin{definition}
Given game $G=(S,T,A,B,V,\pi)$, let $\tilde{G}_B:=(S,T,\tilde{A},B,\tilde{V},\pi)$, where 
\[ 
  \tilde{S} := \set{(s,0),(t,1) : s\in S, t\in T},
\]
$\tilde{A} := A\cup B$, and 
\[
  \tilde{V}\big(a,b|(s,i),t\big) := 
  \begin{cases}
   V(a,b|s,t)  &\text{if $i=0$}\\
   \delta_{ab} &\text{if $i=1$}\\
  \end{cases}
\] 
The game $\tilde{G}_A$ is defined similarly.
\end{definition}
It is easy to see that $\tilde{G}_A$ $(\tilde{G}_B)$ is a weak projection game for Alice (Bob) and therefore can be won using a maximally entangled state whenever some perfect quantum strategy exists. Also, any strategy used to win $\tilde{G}_A$ or $\tilde{G}_B$, can be used to win $G$.  Hence, we obtain the following:
\begin{theorem}
%If for any nonlocal game $G$ with perfect strategy, either $\tilde{G}_A$ or $\tilde{G}_B$ has a perfect quantum strategy, then maximal entanglement suffices to win any pseudo-telepathy game. 
A nonlocal game $G$ admits a perfect finite-dimensional quantum strategy if and only if $\tilde{G}_A$ or $\tilde{G}_B$ admits some perfect finite-dimensional quantum strategy.
\label{thm:Main}
\end{theorem}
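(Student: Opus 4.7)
My plan is to prove the iff by treating the two directions separately; the forward direction is a short restriction argument while the reverse direction requires building a strategy for $\tilde{G}_B$ (or $\tilde{G}_A$) from a given strategy for $G$.

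For the $(\Leftarrow)$ direction, I would take a perfect strategy $(\ket{\phi},\{\tilde E^{\tilde s}_i\}_{i\in \tilde A},\{F^t_j\}_{j\in B})$ for $\tilde G_B$ (the case $\tilde G_A$ is symmetric after swapping the roles of Alice and Bob) and restrict it to a strategy for $G$ by having Alice, on input $s\in S$, play the $\tilde G_B$-measurement $\{\tilde E^{(s,0)}_i\}$, while Bob plays his $\tilde G_B$-strategy unchanged. Because $\tilde V(a,b\mid (s,0),t)=V(a,b\mid s,t)$, every outcome that wins $\tilde G_B$ on question $((s,0),t)$ is a winning outcome of $G$ on $(s,t)$, so the restricted strategy wins $G$ with certainty; answers falling in $\tilde A\setminus A$ can be postprocessed to any element of $A$ because a perfect $\tilde G_B$ strategy never produces them on the questions $((s,0),t)$.

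For the $(\Rightarrow)$ direction, starting from a perfect strategy $(\ket{\psi},\{E^s_i\},\{F^t_j\})$ for $G$, I would build a perfect strategy for $\tilde G_B$ by augmenting the shared state with an auxiliary maximally entangled state $\ket{\Phi}=\frac{1}{\sqrt{|B|}}\sum_{b\in B}\ket{bb}$ on a fresh register $A'B'$, giving overall state $\ket{\psi}_{AB}\otimes\ket{\Phi}_{A'B'}$. On input $(s,0)$ Alice measures $\{E^s_i\}$ on $A$; on input $(t,1)$ she measures the transpose-trick POVM $\{(F^t_j)^{\tp}\}_{j\in B}$ on $A'$, whose outcome is perfectly correlated through $\ket{\Phi}$ with a $\{F^t_j\}$-measurement on $B'$. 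To handle both inputs with a single Bob measurement, I would first apply Naimark dilation to make $\{F^t_j\}$ projective and then define Bob's joint POVM on $BB'$ as the coherent measurement that performs the projective $\{F^t_j\}$ on $B$ and coherently copies the outcome into $B'$ (via a controlled unitary keyed to the computational basis of $\ket{\Phi}$).

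The main obstacle I foresee is designing Bob's joint POVM so that a single outcome $j$ genuinely serves both purposes: the $B$-marginal against Alice's $\{E^s_i\}$ must reproduce the original statistics $\tr(E^s_i\otimes F^t_j\ketbras{\psi})$, while the $B'$-behavior against Alice's $\{(F^t_j)^{\tp}\}$ must give $j$-outcomes supported on the diagonal. Verifying this reduces to two routine checks: tracing out $A'B'$ in the original-input case, using $\tr_{A'}\ketbras{\Phi}=\id/|B|$ so that the $\ket{\psi}$-correlations survive, and applying the transpose identity $\tr((M^{\tp}\otimes M)\ketbras{\Phi})=\tr(M^2)/|B|$ in the consistency case, which is $\delta_{jk}$ on projective $F^t_j$'s. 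Combined with the easy direction this yields the iff.
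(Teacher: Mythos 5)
Your $(\Leftarrow)$ direction is correct and is exactly the paper's observation that any perfect strategy for $\tilde G_A$ or $\tilde G_B$ restricts to one for $G$. The gap is in your $(\Rightarrow)$ direction. In your construction Bob answers his question $t$ with a \emph{single} POVM on $BB'$, and that one outcome must pass two incompatible tests: against Alice's input $(s,0)$ it has to reproduce the $\ket{\psi}$-correlations with her register $A$, while against $(t,1)$ it has to agree with certainty with a measurement Alice performs on $A'$ alone. Since the overall state is a product across the cut $AB\,|\,A'B'$, one may imagine Alice measuring both $A$ (with $E^s$) and $A'$ (with $\set{(F^t_j)\tp}$); these act on different registers, so the joint distribution of (her $A$-outcome, her $A'$-outcome, Bob's outcome) is well defined, the first two are independent, and perfect consistency forces Bob's outcome to equal the $A'$-outcome almost surely. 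Hence Bob's answer to $t$ is statistically independent of Alice's answer to any $s$, and perfection on the pairs $((s,0),t)$ then forces $V(a,b|s,t)=1$ for every $a$ in the support of Alice's marginal and every $b$ in the support of Bob's marginal, i.e.\ it yields a perfect \emph{deterministic classical} strategy for $G$. So for exactly the games the theorem is about --- pseudo-telepathy games --- no Bob measurement of the kind you describe exists. Concretely, the ``coherent copy'' fails both ways: $B'$ is maximally mixed on Bob's side, so copying the $F^t$-outcome into it and reading $B'$ returns the outcome scrambled by uniform noise (hence wrong against $(s,0)$), while outputting the $B$-outcome directly leaves it uncorrelated with $A'$ (hence wrong against $(t,1)$). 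There is also a smaller mismatch --- $F^t_j$ acts on $\C^d$, not on the $|B|$-dimensional register carrying $\ket{\Phi}$ --- but that is secondary.

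This is not a repairable local slip: the literal forward implication, combined with the easy direction and \Thm{Proj}, would say that every game with a perfect finite-dimensional strategy admits a perfect strategy with a maximally entangled state, i.e.\ it would settle the paper's central open question, which the Discussion explicitly leaves open. Correspondingly, the paper's own argument for this theorem consists only of the two remarks preceding it ($\tilde G_A$, $\tilde G_B$ are weak projection games, so a maximally entangled state suffices for them, and their strategies restrict to strategies for $G$); the substantive content, as the Discussion makes clear, is the characterization ``$G$ admits a perfect strategy \emph{with a maximally entangled state} if and only if $\tilde G_A$ or $\tilde G_B$ admits a perfect strategy.'' For that version your transpose idea is the right ingredient, but it must be applied to the \emph{shared} maximally entangled state itself: on input $(t,1)$ Alice measures the transposed operators $\set{(F^t_j)\tp}$ on her half of the same state, which by \Eq{Formula} agrees with Bob's outcome with certainty once his measurement operators have orthogonal supports (e.g.\ are projective), while the checks on $((s,0),t)$ are untouched. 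Appending an auxiliary $\ket{\Phi}$ to an arbitrary $\ket{\psi}$ cannot substitute for this, for the reason above.
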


The above theorem can be used to show that a maximally entangled state is sufficient for both binary constraint system and homomorphism games. 

In a binary constraint system game $G$, Alice is asked to assign values to the binary variables in a constraint $c_s$ and Bob is asked to assign a value to a binary variable $x_t$. To win, their answers need to be consistent and Alice's assignment has to satisfy the constraint $c_s$. Any strategy used to win game $G$ can also be used to win $\tilde{G}_B$. This is because upon receiving input $(t,1)$ Alice can perform any measurement corresponding to a constraint $c_s$ that contains variable $x_t$ and respond with the value she would have assigned to the variable $x_t$.

In a homomorphism game $G$, Alice and Bob have the same input and output sets ($S=T$ and $A=B$) and their answers need to agree when they are given the same outputs. Therefore, any strategy used to win $G$ can be used to win both $\tilde{G}_A$ and $\tilde{G_B}$ (essentially by ignoring the extra bit in the modified game).

\section{Discussion}
In this paper we have looked at the question of whether there exist pseudo-telepathy games that cannot be won with certainty using a maximally entangled state in some dimension $d$. As partial progress towards answering this question we have exhibited a class of pseudo-telepathy games for which maximally entangled state is always sufficient. Additionally, we have characterized pseudo-telepathy games which admit perfect strategies with a maximally entangled state.
We hope that this characterization might help in producing an example of pseudo-telepathy which does not admit a maximally entangled perfect strategy (assuming such games exist).

An intersting open question is whether \Lem{Commute} admits an approximate version. More formally: suppose two measurements produce almost perfectly correlated outcomes, does it imply that the measurement operators almost commute with the reduced state $D$?

\paragraph{Acknowledgments.} I would like to thank Dmitry Gavinsky for suggesting the formulation of \Lem{Mix} and David Roberson for many helpful discussions. 
%Laura Man\v{c}inska is
This research is supported by the Singapore Ministry of Education (MOE) and National Research Foundation Singapore, as well as MOE Tier 3 Grant ``Random numbers from quantum processes'' (MOE2012-T3-1-009).

\bibliographystyle{alphaurl}
%\bibliography{MaxEnt}

\newcommand{\etalchar}[1]{$^{#1}$}

\newpage

\appendix
\section{Communication protocols from entangled strategies}
\label{apx:Comm}

Communication complexity was first introduced by Yao \cite{Yao79} and there have been several examples where results in the communication setting have been translated to nonlocal games. Here notable examples include Hidden Matching problem  \cite{Buhrman11} and coloring of Hadamard graphs, which can be seen as a certain promise equality problem \cite{Buhrman98}. In both cases separations between the quantum and classical communication complexity were translated into separations between the entangled and classical values of the game.

In the communication setting the goal of two separated parties, Alice and Bob, is to compute a joint function $f(s,t)$, where input $s\in S$ is given to Alice and $t\in T$ to Bob. We focus on one-way  communication protocols, where, after receiving her input $s$, Alice sends a message $m_s$ to Bob who must then correctly output $f(s,t)$. The communication cost of such a protocol is the length of the message $m_s$ (measured in bits or qubits) for the worst case input $s$. Consequently, the communication complexity of $f$ is the minimum communication cost over all valid protocols. Let us stress that throughout this section we only consider the setting where Bob must always announce the correct value of $f(s,t)$.

We are interested in the deterministic one-way communication complexity, $D^1(f)$, and its quantum analogue $Q_0^1(f)$ and in our case $f$ is a certain promise equality problem $\EQ$ (\ie, partial function). For this problem, the input sets $S$ and $T$ are the vertices of the Hadamard graph $H_n$ where $4|n$, \ie, $n$-bit binary strings. The promise is that either $s=t$ or $s$ is adjacent to $t$ (strings $s$ and $t$ differ in exactly half of the positions) and Bob must announce which case it is. In other words, $\EQ(s,t) := \delta_{st}$ for the allowed input pairs $(s,t)$. In general, if $\EQX$ is the equality problem with the promise that the parties are either given equal or adjacent vertices of a graph $X$, then  $D^1(\EQ) = \log(\chi(X))$, where $\chi(X)$ is the chromatic number of $X$ \cite{deWolf01}. It now follows that $D^1(\EQ) = \log(\chi(H_n)) = \Omega(n)$ since the chromatic number of Hadamard graphs $H_n$ is exponential in $n$ for $n=4m$~\cite{Frankl,Avis06}. To determine the quantum communication complexity $Q^1_0(\EQ)$, it is useful to consider a relaxation of coloring, called orthogonal representation. In such a representation, for each vertex $s\in V(X)$ we want to assign a unit vector $\ket{v_s}\in\C^d$ so that adjacent vertices receive orthogonal vectors. The orthogonal rank of $X$, denoted $\xi(X)$, is the smallest dimension $d$ in which $X$ admits an orthogonal representation. Similar to the classical case, it is known that \mbox{$\log(\xi(X)) = Q^1_0(\EQX)$}~\cite{Briet15}.

\begin{lemma} Let $4|n$. Then $Q_0^1(\EQ) = \log\big(\xi(H_n) \big)= \log n$.
\label{lem:CommLowerBd}
\end{lemma}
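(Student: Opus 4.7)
Since the first equality $Q_0^1(\EQ) = \log\xi(H_n)$ is quoted from \cite{Briet15} right before the lemma, the plan is to establish $\xi(H_n) = n$.

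For the upper bound $\xi(H_n) \leq n$, I would exhibit a Fourier-type orthogonal representation in $\C^n$ by assigning $s \mapsto \ket{v_s} := \tfrac{1}{\sqrt{n}}\sum_{i=1}^n (-1)^{s_i}\ket{i}$. For adjacent $s,t$ (differing in exactly $n/2$ positions),
\[
\braket{v_s}{v_t} = \tfrac{1}{n}\sum_{i=1}^n(-1)^{s_i\oplus t_i} = \tfrac{1}{n}\bigl((n/2)-(n/2)\bigr) = 0,
\]
so the $\ket{v_s}$ form an orthogonal representation of $H_n$ in dimension $n$.

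For the lower bound $\xi(H_n) \geq n$, the plan is to use the Hoffman-type spectral bound $\xi(G) \geq \vartheta(\bar G) \geq 1 - \lambda_{\max}(G)/\lambda_{\min}(G)$. Since $H_n$ is a Cayley graph on $\Z_2^n$ with connection set $C := \set{t : |t| = n/2}$, its eigenvalues are $\lambda_\chi = \sum_{t \in C}(-1)^{\chi \cdot t} = K_{n/2}(|\chi|)$, indexed by characters $\chi \in \Z_2^n$, where $K_{n/2}$ is the Krawtchouk polynomial. The maximum eigenvalue is $\lambda_{\max} = K_{n/2}(0) = \binom{n}{n/2}$. A direct computation using the identity $\binom{n-2}{n/2} = \binom{n-2}{n/2-2}$ (valid since $4 \mid n$) gives $K_{n/2}(2) = -\binom{n}{n/2}/(n-1)$, and taking this to be $\lambda_{\min}$ the bound yields $\xi(H_n) \geq 1 + (n-1) = n$.

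The main obstacle is verifying that $K_{n/2}(2)$ is actually the minimum of $K_{n/2}(k)$ over $k \in \set{0,1,\ldots,n}$ for $4 \mid n$. The case $k=1$ gives $K_{n/2}(1) = 0$ by the symmetry $\binom{n-1}{n/2} = \binom{n-1}{n/2-1}$; for other $k$, one invokes standard sign/monotonicity properties of Krawtchouk polynomials (as used in \cite{Godsil08} for the same graph family). Once this spectral fact is in hand, combining the two bounds gives $\xi(H_n) = n$, and hence $Q_0^1(\EQ) = \log n$.
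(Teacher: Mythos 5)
Your proposal is correct in outline and shares most of its skeleton with the paper's proof: the identity $Q_0^1(\EQ)=\log\xi(H_n)$ is quoted from \cite{Briet15}, the upper bound $\xi(H_n)\le n$ uses exactly the same vectors $\ket{v_s}$, and the lower bound starts from Lov\'asz's inequality $\xi(X)\ge\vartheta(\overline{X})$ \cite{Lovasz79}. Where you genuinely diverge is in how $\vartheta(\overline{H_n})\ge n$ is obtained: the paper simply cites $\vartheta(\overline{H_n})=n$ from \cite{David}, whereas you reprove the needed inequality via the Hoffman-type ratio bound $\vartheta(\overline{G})\ge 1-\lambda_{\max}/\lambda_{\min}$ combined with the Cayley-graph spectrum of $H_n$ (Krawtchouk values). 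This buys a more self-contained argument that makes visible where $4\mid n$ enters, at the cost of two points you must nail down. First, justify the ratio bound for $\vartheta(\overline{G})$: for regular graphs it follows by combining $\vartheta(G)\,\vartheta(\overline{G})\ge |V(G)|$ with Lov\'asz's bound $\vartheta(G)\le -|V(G)|\,\lambda_{\min}/(\lambda_{\max}-\lambda_{\min})$, and $H_n$ is a Cayley graph, so this is fine but should be stated. Second, the identification $\lambda_{\min}(H_n)=K_{n/2}(2)=-\binom{n}{n/2}/(n-1)$ is the real content of the lower bound and is precisely where $4\mid n$ is used (for $n\equiv 2\pmod 4$ the graph is bipartite and $\lambda_{\min}=-\binom{n}{n/2}$, so the ratio bound only gives $2$); your parenthetical that the identity $\binom{n-2}{n/2}=\binom{n-2}{n/2-2}$ requires $4\mid n$ is misplaced, since it holds for every even $n$. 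The minimality claim is true and can be made short, e.g.\ via the closed form $K_{n/2}(2u)=(-1)^u\binom{n/2}{u}\binom{n}{n/2}/\binom{n}{2u}$ together with the fact that odd-weight characters give eigenvalue $0$, but as written it is deferred to ``standard properties'' and \cite{Godsil08}, so you should either include that computation or give a precise citation. With that supplied, your spectral route is a valid, somewhat longer substitute for the paper's direct appeal to $\vartheta(\overline{H_n})=n$.
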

\begin{proof}
When Lov\`{a}sz introduced his Theta function in \cite{Lovasz79}, he showed that Theta of the complement, $\vartheta(\overline{X})$, lower bounds the orthogonal rank $\xi(X)$ for any graph $X$. Combining this with the fact that $\vartheta(\overline{H_n}) = n$ \cite{David}, we obtain that $\xi(H_n) \ge n$. To see that $\xi(H_n) \le n$, it suffices to note that 
\begin{equation}
  \ket{v_s} = \frac{1}{\sqrt{n}}\sum_{i\in [n]} (-1)^{s_i} \ket{i}
\end{equation}
gives an $n$-dimensional orthogonal representation of $H_n$. To conclude the proof we use the fact that $Q_0^1(\EQ) = \xi(H_n)$.\qed
\end{proof}

We summarize the one-way communication complexities for promise equality problems in \Tab{CC}.\vspace{-0.1in}

\begin{table}
\addtolength{\tabcolsep}{7pt}
\normalsize
\begin{tabular}{l|c|c}
 &  $\EQX$ & $\EQP{H_n} = \EQ$\\
\hline
One-way deterministic c.c., $D^1$ & $\log\big(\chi(X)\big)$ & $\Omega(n)$\\
One-way exact quantum c.c., $Q^1_0$ & $\log\big(\xi(X)\big)$ & $\log  n$ 
\end{tabular}\vspace{0.2in}
\caption{\normalsize One-way communication complexity of promise equality problems}
\label{tab:CC}
\end{table}\vspace{-0.3in}

According to \Lem{CommLowerBd}, any exact one-way quantum communication protocol for $\EQ$ must use messages of length $\log n$. We would like to 
transfer this lower bound to the nonlocal games setting (recall the definition of the games $G_n$ from page~\pageref{pg:Gn}).
%use this lower bound to infer a similar lower bound on the amount of entanglement that is needed in any perfect strategy for the coloring game $G_n$. 
To this end, we consider a simple construction for converting entangled strategies for $G_n$ to one-way communication protocols for $\EQ$. In fact, a similar construction can be used to relate strategies for any coloring game on a graph $X$ to one-way communication protocols for the promise equality problem $\EQX$.

\paragraph{Communication protocol for $\EQ$ from an entangled strategy for $G_n$.}\label{Construction}
Consider a perfect entangled strategy $\mathcal{S}$ for the nonlocal game $G_n$. Let $\ket{\psi}\in \C^d\otimes \C^d$ be the shared entangled state used in $\mathcal{S}$. Also let $\mathcal{A}^s$ and $\mathcal{B}^t$ be the measurements they apply upon receiving $s$ and $t$ respectively. In the communication setting, upon receiving input $s$, Alice prepares the bipartite state $\ket{\psi}$ and measures its first register using $\mathcal{A}^s$. She then sends the measurement outcome $a\in[n]$ along with the second register of the state to Bob. This requires communication of $\log n $ classical bits and $\log d$ qubits. After receiving the quantum state, Bob measures it using $\mathcal{B}^t$ and compares the outcome of his measurement, $b$, to Alice's outcome $a$ that she had sent over. He announces that $s=t$, if $a=b$ and otherwise he says that $s\neq t$. The correctness of Bob's answer follows directly from the fact that the strategy $\mathcal{S}$ is perfect for $G_n$.

\medskip

We now show that quantum one-way communication protocols for $\EQ$ in which Alice also sends Bob a small amount of classical information do not allow for significant savings in the quantum communication cost. In combination with the above translation of entangled strategies into communication protocols this will allow us to lower bound the dimension of the entanglement needed for a perfect strategy of $G_n$.

\begin{lemma}
Let $\mathcal{P}$ be an exact one-way communication protocol for the promise equality problem $\EQ$ which uses $\log n$ classical bits and $\log d$ qubits. Then \mbox{$d = \Omega(n)$}, \ie, the quantum communication cost of $\mathcal{P}$ is $\Omega(\log n)$. 
\label{lem:Mix}
\end{lemma}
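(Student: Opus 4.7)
The plan is to reduce the lemma to a graph-theoretic statement about induced subgraphs of the Hadamard graph $H_n$, and then to tackle that reduced statement using the symmetry of $H_n$.

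First I would extract an orthogonal-representation structure from any exact protocol $\mathcal{P}$. Writing Alice's classical message as a deterministic function $m\colon V(H_n)\to[n]$, her quantum message as a pure state $\ket{\phi_s}\in\C^d$, and Bob's ``$s=t$'' projector as $\Pi^{=}_{m,t}$, the exactness of $\mathcal{P}$ forces $\Pi^{=}_{m(t),t}\ket{\phi_t}=\ket{\phi_t}$ for every $t$, and $\Pi^{=}_{m,t}\ket{\phi_s}=0$ whenever $s\sim t$ in $H_n$ and $m(s)=m$. In particular, if $s\sim t$ with $m(s)=m(t)=m$ then $\ket{\phi_s}$ and $\ket{\phi_t}$ lie in orthogonal subspaces of $\C^d$, so $\braket{\phi_s}{\phi_t}=0$. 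Setting $S_m:=m^{-1}(m)$, this says $s\mapsto\ket{\phi_s}$ restricted to $S_m$ is an orthogonal representation of the induced Hadamard subgraph $H_n[S_m]$ in $\C^d$, giving $\xi(H_n[S_m])\le d$ for every $m\in[n]$.

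Since $\{S_m\}_{m\in[n]}$ partitions $V(H_n)$ and $|V(H_n)|=2^n$, pigeonhole picks out some $m^\ast$ with $|S_{m^\ast}|\ge 2^n/n$. The lemma then reduces to the graph-theoretic claim that $\xi(H_n[S])=\Omega(n)$ whenever $S\subseteq V(H_n)$ satisfies $|S|\ge 2^n/n$.

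I expect this last claim to be the main obstacle. The naive combination of the $n$ partial orthogonal representations into a single representation of $H_n$ in dimension $nd$ yields only $d\ge\xi(H_n)/n=1$, which is far too weak. To sharpen it I would exploit the vertex-transitivity of $H_n$ under the natural action of $\Z_2^n$: for any fixed $S$ with $|S|\ge 2^n/n$, the orbit $\{S\oplus g:g\in\Z_2^n\}$ covers each vertex of $H_n$ exactly $|S|$ times by adjacency-preserving maps, so any lower bound on $\xi(H_n[S])$ should lift, via averaging, to information about $H_n$ itself. The concrete quantitative plan would be to combine this averaging with the tight identity $\vartheta(\overline{H_n})=n$ from \Lem{CommLowerBd} and with the Frankl--Rödl-style bounds $\alpha(H_n)\le 2^{(1-\epsilon)n}$ that underlie $\chi(H_n)=2^{\Omega(n)}$, in order to force $\xi(H_n[S])=\Omega(n)$, and hence $d=\Omega(n)$.
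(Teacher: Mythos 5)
Your first step matches the paper's: fix Alice's deterministic behaviour, partition $V(H_n)$ by the classical message, and observe that the quantum states restricted to each part $S_m$ form an orthogonal representation, i.e.\ $\xi(H_n[S_m])\le d$. The gap is everything after that. The reduced claim you need --- $\xi(H_n[S])=\Omega(n)$ whenever $\abs{S}\ge 2^n/n$ --- is true, but the route you sketch (vertex-transitivity plus averaging, $\vartheta(\overline{H_n})=n$, and the Frankl--R\"odl bound on $\alpha(H_n)$) does not deliver it as stated. The natural instantiations of that plan collapse to trivial bounds: for instance, monotonicity of $\vartheta$ and $\vartheta(H_n)=2^n/\vartheta(\overline{H_n})=2^n/n$ only give $\xi(H_n[S])\ge\vartheta(\overline{H_n[S]})\ge \abs{S}/\vartheta(H_n[S])\ge \abs{S}/(2^n/n)\ge 1$, and neither $\vartheta(\overline{H_n})=n$ nor $\alpha(H_n)\le 2^{(1-\epsilon)n}$ by themselves say anything about how small the \emph{orthogonal rank} of a dense induced subgraph can be; lower bounds on $\alpha$-type quantities control the chromatic number, not $\xi$, and $\xi$ can lie well below $\chi_f$. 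So the ``main obstacle'' you flag is exactly the point where the argument is missing, and averaging over the $\Z_2^n$-action gives no evident way around it.

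The missing ingredient --- and the one the paper uses --- is the inequality $\chi(X)\le(1+2\sqrt{2})^{2\xi(X)}$ of Cameron et al.~\cite{Cameron}, which converts an orthogonal-rank upper bound into a coloring. With it your reduction closes immediately: $\xi(H_n[S_{m^\ast}])\le d$ gives $\chi(H_n[S_{m^\ast}])<14^d$, while $\chi(H_n[S_{m^\ast}])\ge\abs{S_{m^\ast}}/\alpha(H_n)\ge 2^{\epsilon n}/n$ by Frankl--R\"odl~\cite{Frankl}, forcing $d=\Omega(n)$. The paper arranges this slightly differently and avoids the pigeonhole: it colors \emph{every} part $L_m$ with fewer than $14^d$ colors, so Alice can send $m_s$ together with the color of $s$, yielding a deterministic one-way protocol for $\EQ$ of cost $\log n+3d$; comparing with $D^1(\EQ)=\log\chi(H_n)=\Omega(n)$ gives the claim. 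Either arrangement works, but without the $\chi$-versus-$\xi$ bound (or some substitute lower-bounding $\xi$ of large induced subgraphs of $H_n$ directly) your proposal does not constitute a proof.
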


\begin{proof}
Without loss of generality, we can assume that Alice's actions in the protocol $\mathcal{P}$ are deterministic. That is, upon receiving input $s\in \set{0,1}^n$ she transmits a classical message $m_s$ of at most $\log n$ bits and a quantum state $\ket{\psi_s}\in\C^d$. The messages $m_s$ give rise to a partition of the vertex set of the Hadamard graph $H_n$ into induced subgraphs $\set[\big]{L_m: m\in \set{0,1}^{\log n}\cong[n]}$. Therefore, $\mathcal{P}$ can be used to obtain one-way quantum communication protocols of cost $\log d$ for each of the promise equality problems $\EQP{L_m}$. Therefore, 
\begin{equation}
  \log \big(\xi(L_m)) = Q_0^1\big(\EQP{L_m}\big)\le \log d.
\end{equation}
Since $\chi(X)\le(1+2\sqrt{2})^{2\xi(X)}$ for any graph $X$~\cite{Cameron}, we obtain that 
\begin{equation}
  \chi(L_m) \le (1+2\sqrt{2})^{2d}<14^d.
\end{equation}
Therefore, $D\big(\EQP{L_m}\big)\le \log(14^d)<3d$ for all $m\in[n]$. Together with the classical messages $m_s$ from the protocol $\mathcal{P}$ this yields a deterministic one-way communication protocol for $\EQ$ of cost $\log n + 3d$. By \Lem{CommLowerBd}, we get that $\log n+ 3d =\Omega(n)$ and hence $d=\Omega(n)$ as desired.\qed
\end{proof}

We are now ready to establish a lower bound on the amount of entanglement required in any perfect strategy for $n$-coloring the Hadamard graph $H_n$.

\Lower*

\begin{proof}
Consider any perfect strategy $\mathcal{S}$ for $G_n$ and let $d$ be the local dimension of the shared entangled state employed in $\mathcal{S}$. The above construction (see page~\pageref{Construction}) allows us to convert $\mathcal{S}$ into an exact one-way communication protocol for $\EQ$ which uses $\log n$ bits and $\log d$ qubits. By \Lem{Mix} we get that $d=\Omega(n)$ as desired.\qed
\end{proof}

\end{document}